\newtheorem{theorem}{Theorem}[section]
\newtheorem{proposition}[theorem]{Proposition}
\theoremstyle{definition}
\newtheorem{remark}[theorem]{Remark}
\begin{document}
 
\title[An application of of circular hodograph theorem]{A small variation of the circular hodograph theorem and the best elliptical trajectory of the planets}

\author{Carman Cater, Oscar Perdomo, Amanda Valentine}
\address{Department of Mathematics\\ Central Connecticut State University\\ New Britain, CT 06050, USA}

\keywords{hodograph, ellipse, eccentricity, Sun, solar system  }
\date{}

\begin{abstract} 
A small variation of the circular shape of the hodograph theorem states that for every elliptical solution of the two-body problem, it is possible to find an appropriate inertial frame such that the speed of the bodies is constant. We use this result and data from the NASA JPL Horizon Web Interface to find the best fitting ellipse for the trajectory of Mercury, Venus, Earth, Mars, and Jupiter.  The process requires us to find procedures to obtain the plane and ellipse that best fit a collection of points in space. We show that if we aim for the plane that minimizes the sum of the square distances from the given points to the unknown plane, we obtain three planes that appear  to divide the set of points equally into octants, one of these being our desired plane of best fit. We provide a detailed proof of the hodograph theorem.
\end{abstract}
 
\maketitle

\section{Introduction} 

Kepler's Second Law states that for a planet traveling around the sun, the segment connecting the sun with the planet sweeps out equal areas in equal times.  A similar relation is given by the circular shape of the hodograph theorem.  This theorem applies not only to planets orbiting a sun, but also comets following conical orbits. It states that for any conical solution of the two-body problem, when we consider the velocity vectors of the motion of one of the two bodies the tips of these vectors lies on a circle when we fix their tails to a fixed point. See Figure \ref{fig:VV1}

\begin{figure}[h!]
  \includegraphics[width=.4\linewidth]{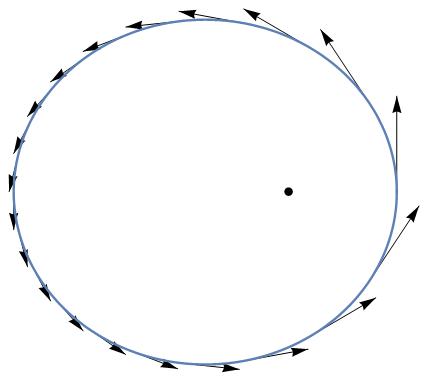}\includegraphics[width=.3\linewidth]{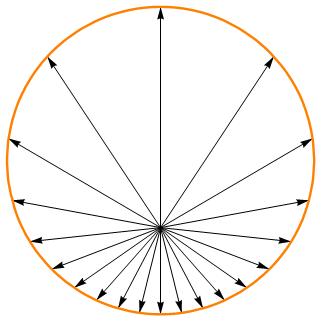} 
  \caption{  Left: Trajectory of a planet around the sun with its velocity vectors.  Right: Velocity vectors with tails fixed at the origin.}
  \label{fig:VV1}
\end{figure}



As mentioned in {\color{blue}\cite{B}} according to Goldstein {\color{blue} \cite{G}}, for the elliptical orbits this theorem was first communicated in 1846 by Hamilton.

In this paper we point out (and for completeness we provide a proof) a small variation of the circular shape of the hodograph theorem.  It states that after an appropriate inertial change of coordinates the fixed point for the tail of the vectors can be made to be the center of the circle, and therefore, the length of each velocity vector --the speed-- is constant under this new system of reference. When motion is given by ellipses, the direction of this suitable inertial frame is perpendicular to the major axis of the ellipse.  

The small variation of the circular shape of the hodograph theorem provides not only the direction of the major axis of the ellipse, but the eccentricity as well. To illustrate an application we use this fact to compute the best ellipse that fits the trajectory of some planets.  Even though in this paper we explain a procedure to find the best ellipse by just using the points from the trajectory, we decided to use use the hodograph theorem so that not only the points but also the velocity vectors are used in our choice of the best ellipse.

In the process of finding the best ellipse we need to compute the best plane that fits a collection of points. For sake of completeness we show how to find the plane that minimizes the square of the distances from the given points to the unknown plane. An interesting observation that we noticed is that this process not only gives us a best plane, but it also provides three planes that evenly distribute the points.  For greater generality we show this for any hyperplane in the $ n $ dimensional Euclidean space $\mathbb{R}^n$.  

Section \ref{tbp} explains and provides a proof of a small variation of the circular hodograph theorem.  Also, in the case an elliptical motion it provides formulas for the parameters involved in the circular hodograph theorem in terms of the parameters of the ellipse. Section \ref{bp} describes how to find the best hyperplane that fits a collection of points in $\mathbb{R}^n$, and also explains the observation that this procedure gives us $ n $ hyperplanes that evenly distribute the points.  Section \ref{be} describes the procedure that we are using to find the best ellipse that fits a given collection of points. We do two procedures, one with no assumption on the ellipse, and the other assumes that we know the direction of one of the axes as well as the eccentricity. We also explain how to take the velocities and positions of the planet from NASA JPL Horizon Web Interface. Section \ref{p} gives the step by step procedure we use for each planet.  Lastly, Section \ref{r} shows the parametrizations of the ellipse that best fits the trajectory of Mercury, Venus, Earth, Mars, and Jupiter.

\section{A small variation of the circular hodograph theorem}\label{tbp}

Since the proof of the hodograph theorem is obtained by just adding a few lines to the proof of the fact that the trajectories for the two-body problem are conics, this section not only provides the proof of the hodograph theorem but also explains the solution of the two-body problem.  

\subsection{Proof of the hodograph theorem}Let us consider the two-body problem and assume that the masses of the two bodies are $M$ and $m$.  Furthermore, let us assume that the body with mass $M$ moves with position $x(t)=(x_1(t),x_2(t))$ and the body with mass $m$ moves with position $y(t)=(y_1(t),y_2(t))$. 
If we let $r=y-x$ and $\mu=(M+m)G$, where $G$ is the gravitational constant, we get that $\ddot{r}= -\frac{(M+m) G}{|r|^3}  \,  r= -\frac{\mu \, r}{|r|^3}$.  Assuming that the trajectory of the bodies are not in a line and their center of mass stays fixed, then these trajectories are ellipses, parabolas, or hyperbolas. Taking $r(0)$ the closest point to the origin, and defining $e_1=-\frac{r(0)}{|r(0)|}$ and  $e_2=-\frac{\dot{r}(0)}{|\dot{r}(0)|}$ we have that these vectors form an orthonormal basis and 
we can find functions $\rho$ and $\theta$ such that  $r=\rho\cos(\theta) e_1+\rho\sin(\theta) e_2$. Letting $|r(0)|=r_0>0$ we have that $r(0)=-r_0 e_1=r_0 \cos(\pi)e_1+r_0 \sin(\pi)e_2$.  Therefore we can pick $\theta(0)=\pi$ and  $\rho(0)=r_0$. Our choice of $r(0)$ gives us that that $\dot{\rho}(0)=0$, therefore if $|\dot{r}(0)|=v$, then  $-ve_2 =  \dot{r}(0)= \dot{\theta}(0)\rho(0)\cos(\pi) e_2 $, which implies that  $\dot{\theta}(0)=\frac{v}{r_0}$. Let us define the unit vector fields $u_1$ and $u_2$ by 
$$u_1(t)=\cos(\theta(t)) e_1 +\sin(\theta(t)))e_2 \quad\hbox{and}\quad u_2(t)=-\sin(\theta(t)) e_1 +\cos(\theta(t)))e_2 $$
then $r(t)=\rho(t)u_1(t)$, $\dot{r}= \dot{\rho}u_1+\dot{\theta} \rho u_2$ and,
$$\ddot{r}= (\ddot{\rho}-\dot{\theta}^2\rho)u_1+(2\dot{\theta} \dot{\rho}+\rho\ddot{\theta})u_2=
 (\ddot{\rho}-\dot{\theta}^2\rho)u_1+\frac{1}{\rho}\,\frac{d({\dot{\theta}\rho^2})}{dt}u_2$$
Since $\ddot{r}=-\frac{\mu \, r}{|r|^3}= -\frac{\mu}{\rho^2}\, u_1$ then, $\frac{d({\dot{\theta}\rho^2})}{dt}$ must be zero and 

\begin{eqnarray}\label{sie}
 \ddot{\rho}-\frac{k^2}{\rho^3}+\frac{\mu }{\rho^2}=0
\end{eqnarray}

where $k=\dot{\theta}\rho^2$ which we know is constant because  $\frac{d({\dot{\theta}\rho^2})}{dt}=0$.  From the previous Equation (\ref{sie}) we have that

\begin{eqnarray} \label{cc}
 (\dot{\rho})^2+\frac{k^2}{\rho^2}-2 \frac{\mu }{\rho}=c
\end{eqnarray}

for some constant $c$. Since $\rho(0)=r_0$, $\dot{\rho}(0)=0$,  and $\dot{\theta}(0)=\frac{v}{r_0}$, then $k= vr_0$ and $c=v^2-\frac{2\mu}{r_0}$.  Since $\rho=\frac{a}{1-e \cos(\theta)}$ describes the equation of a conic in polar coordinates $(\rho,\theta)$ with eccentricity $e$, we now look for a solution of Equation (\ref{cc}) of the form

$$\rho(t)=\frac{A}{1-B \cos(\theta(t))}$$

Note that replacing $t=0$ gives $r_0=\frac{A}{1+B}$. Since 

$$\dot{\theta}=\frac{k}{\rho^2}=\frac{k(1-B\cos(\theta(t)))^2}{A^2}$$

then 

$$\dot{\rho}=-\frac{Bk}{A}\sin(\theta)$$

and Equation (\ref{cc}) reduces to 

$$\frac{-2 B \left(k^2-A \mu \right) \cos (\theta (t))-2 A \mu +\left(B^2+1\right) k^2}{A^2}=c$$

A direct computation shows that making $A=\frac{k^2}{\mu}=\frac{v^2r_0^2}{\mu}$ and $B=\frac{r_0v^2-\mu}{\mu}$ solves the equation above and also satisfies the equation $r_0=\frac{A}{1+B}$. We conclude that  $r(t)$ lies in a conic with eccentricity $e=|\frac{r_0v^2-\mu}{\mu}|$. 
So far we have shown that the trajectories of the solutions of the two-body problem are conics.  The variation of the hodograph theorem is obtained by noticing if
$$\xi=\frac{r_0 v^2-\mu }{r_0 v}$$
and $V=\dot{r}+\xi e_2$, then 
$$V\cdot V=\frac{\mu ^2}{r_0^2 v^2}$$

Let us check the computation in detail.

\begin{proposition}
Assuming the notation used in this section, if we take the vector $V=\dot{r}+\xi e_2$ and let $ \xi=\frac{r_0 v^2-\mu }{r_0 v} $ then $$V\cdot V=\frac{\mu ^2}{r_0^2 v^2}$$ showing that the vector $ V $ has constant length.
\end{proposition}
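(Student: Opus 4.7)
The plan is to compute $V$ explicitly in the fixed orthonormal basis $\{e_1,e_2\}$ and then read off its length directly. Starting from $\dot{r}=\dot{\rho}u_1+\dot{\theta}\rho u_2$ together with the already-derived expressions $\dot{\rho}=-(Bk/A)\sin\theta$ and $\dot{\theta}\rho = k(1-B\cos\theta)/A$, I would substitute $u_1=\cos\theta\,e_1+\sin\theta\,e_2$ and $u_2=-\sin\theta\,e_1+\cos\theta\,e_2$ and collect. In the $e_1$-coefficient the telescoping identity $B\cos\theta+(1-B\cos\theta)=1$ occurs, while in the $e_2$-coefficient the Pythagorean identity $\sin^2\theta+\cos^2\theta=1$ occurs, leaving the compact expression
$$\dot{r}=\frac{k}{A}\bigl(-\sin\theta\,e_1+(\cos\theta-B)\,e_2\bigr).$$

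Next I would use $A=k^2/\mu$, so that $k/A=\mu/k$, and add $\xi e_2$ to obtain
$$V=\dot{r}+\xi e_2=\frac{\mu}{k}\bigl(-\sin\theta\,e_1+\cos\theta\,e_2\bigr)+\Bigl(\xi-\frac{\mu B}{k}\Bigr)e_2.$$
The heart of the proof is the identity $\xi=\mu B/k$: plugging in $k=v r_0$ and $B=(r_0 v^2-\mu)/\mu$ gives $\mu B/k = (r_0v^2-\mu)/(v r_0)$, which is exactly the defined value of $\xi$. This is the key algebraic fact; it is also what forces the particular choice of $\xi$ in the statement, since it is the unique shift in the $e_2$-direction that kills the non-rotational part of $\dot{r}$.

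After this cancellation one is left with $V=(\mu/k)\bigl(-\sin\theta\,e_1+\cos\theta\,e_2\bigr)$, and because $\{e_1,e_2\}$ is orthonormal and $(-\sin\theta,\cos\theta)$ has unit norm, the conclusion $V\cdot V=\mu^2/k^2=\mu^2/(r_0^2 v^2)$ is immediate. No step poses a genuine obstacle; essentially everything is substitution and collecting terms, with the only substantive check being the identity $\xi=\mu B/k$, after which the Pythagorean identity finishes the proof and exhibits $V$ as a constant-length rotating vector, consistent with the circular-hodograph picture the paper is after.
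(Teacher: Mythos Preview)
Your proof is correct, and it takes a somewhat different route from the paper's. The paper never computes $V$ itself; instead it expands the scalar $V\cdot V=\dot r\cdot\dot r+2\xi\,\dot r\cdot e_2+\xi^2$, evaluates each piece using $\dot\rho$, $\dot\theta\rho$, and $\dot r\cdot e_2$ separately, and simplifies until only the term $\frac{2k\cos\theta}{A}\bigl(\xi-\frac{Bk}{A}\bigr)$ carries any $\theta$-dependence, forcing $\xi=Bk/A$. You instead compute $\dot r$ directly in the fixed frame $\{e_1,e_2\}$, obtaining the closed form $\dot r=\frac{k}{A}\bigl(-\sin\theta\,e_1+(\cos\theta-B)\,e_2\bigr)$, so that the shift by $\xi e_2$ visibly cancels the constant $-B$ piece and leaves $V=\frac{\mu}{k}(-\sin\theta\,e_1+\cos\theta\,e_2)$. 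Both arguments hinge on the same identity $\xi=Bk/A=\mu B/k$, but yours is shorter and yields strictly more: it exhibits $V$ explicitly as a uniformly rotating unit vector scaled by $\mu/k$, which is precisely the circular hodograph, whereas the paper's computation only certifies that $|V|$ is constant.
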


\begin{proof}

First we notice that $$ \dot{\rho}^{2} = \frac{B^{2}k^{2}}{A^{2}}\sin^{2}(\theta) \quad\hbox{,}\quad \dot{\theta}^{2}\rho^{2} = \frac{k^{2}}{\rho^{2}} = \frac{k^{2}(1-B\cos(\theta))^{2}}{A^{2}} \quad\hbox{, and}\quad $$
$$ \dot{r}\cdot e_{2} = \dot{\rho}\sin(\theta) + \rho\dot{\theta}\cos(\theta) = \frac{-Bk}{A}\sin^{2}(\theta) + \frac{k}{\rho}\cos(\theta) $$

Since $ \dot{r}= \dot{\rho}u_1+\dot{\theta} \rho u_2 $, then 

\begin{eqnarray*}
V\cdot V &=& (\dot{r}+\xi e_2)(\dot{r}+\xi e_2)\\
&=& \dot{r} \cdot \dot{r} + 2\xi\dot{r}\cdot e_{2} + \xi^{2}\\
&=& \dot{\rho}^{2} + \dot{\theta}^{2}\rho^{2} + 2\xi(\dot{\rho}\sin(\theta) + \rho\dot{\theta}\cos(\theta)) + \xi^{2}\\
&=& \frac{B^{2}k^{2}}{A^{2}}\sin^{2}(\theta) + \frac{k^{2}(1-B\cos(\theta))^{2}}{A^{2}} + 2\xi\left(\frac{-Bk}{A}\sin^{2}(\theta) + \frac{k}{\rho}\cos(\theta)\right) + \xi^{2}\\
&\vdots &\\
&=& \frac{B^{2}k^{2}}{A^{2}} + \frac{k^{2}}{A^{2}}     - \frac{2\xi Bk}{A} + \frac{2k\cos(\theta)}{A}\left(\xi - \frac{Bk}{A}\right) + \xi^{2} 
\end{eqnarray*}
In order for $ V \cdot V $ to be constant, $ \xi - \frac{Bk}{A} = 0 $, therefore

$$ \xi = \frac{Bk}{A} = \frac{r_{0}v^{2}-\mu}{r_{0}v} $$ 

then,

$$V\cdot V =\frac{B^{2}k^{2}}{A^{2}} + \frac{k^{2}}{A^{2}}     - \frac{2\xi Bk}{A} + \xi^{2} =\frac{k^{2}}{A^{2}} = \frac{\mu^{2}}{r_{0}^{2}v^{2}} $$

\end{proof}

For all of the planets in our solar system we have that $2\mu-r_0 v^2>0$ and $\mu-r_0 v^2<0$ which together imply that $0<\frac{r_0v^2-\mu}{\mu}<1$.  Therefore the motion of $r$ under the assumption that all other planets are not affecting the motion of the planet under consideration is an ellipse.


\subsection{Parameters in the hodograph theorem in terms of the parameters of the conic}Using the notation and results introduced in this section we will now derive formulas for semi-axes $a$ and $b$, $v = |\dot{r}(0)|$, as well as different formulas for $ V\cdot V $ and $ \xi $ which will be useful when computing the best fitting ellipse.

Recall that $\rho(t)=\frac{A}{1-B \cos(\theta(t))}$ giving us  $$ dm = r_{0} = \rho(0) = \frac{A}{1 + B}, \quad dM = \frac{A}{1 - B} $$ where $dm$ and $dM$ are the distances of perihelion and aphelion respectively.  Also from the above results we have $$ A = \frac{v^2r_0^2}{\mu}, \quad B=\frac{r_0v^2-\mu}{\mu}$$  Lastly we have the relations $$ 2a = dM + dm,\quad a - c = r_{0} = dm, \quad b^{2} = a^{2} - c^{2} $$ (see Figure \ref{fig:ellipse}).

\begin{figure}[h!]
  \includegraphics[keepaspectratio, width=.8\linewidth]{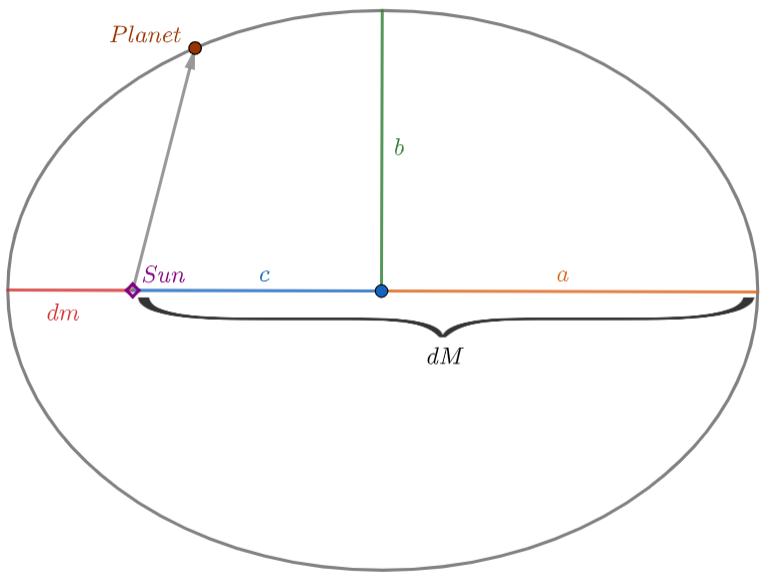} 
  \caption{Illustrating the relations among variables}
  \label{fig:ellipse}
\end{figure}

We begin by finding formulas for the lengths of the semi-axes $a$ and $b$.  By substituting our expressions for $dM$ and $dm$ into the equation $$2a = dM + dm $$ and solving for $a$ gives $$a = \frac{A}{1 - B^{2}} $$  From here we substitute in our expressions for $A$ and $B$, and after simplifying yields $$a=\frac{r_0\mu}{2\mu-r_0v^2}$$  Now since $ c = a - r_{0} $ we have

$$ b^{2} = a^{2} - c^{2} = r_{0}(2a - r_{0}) = r_{0}\left(\frac{2r_0\mu}{2\mu-r_0v^2} - r_{0} \right) = \cdots =  \frac{v^{2}r_{0}^{3}}{2\mu-r_0v^2} $$ 

Therefore the semi-axes of the motion of $r$ are

$$a=\frac{r_0\mu}{2\mu-r_0v^2}\quad\hbox{and}\quad b=\frac{vr_0^{3/2}}{\sqrt{2 \mu - r_0v^2}}$$

From the expression for $a$ we can easily solve for $v$ giving us $$ v = \sqrt{\frac{2\mu}{r_{0}} - \frac{\mu}{a}} $$

Using this expression for $v$ we can now derive alternative formulas for $V\cdot V$ and $\xi$ which will be used in Section \ref{r}.

$$ V\cdot V = \frac{\mu^{2}}{r_{0}^{2}v^{2}} = \frac{\mu^{2}}{r_{0}^{2}\left(\frac{2\mu}{r_{0}} - \frac{\mu}{a}\right)} = \cdots = \frac{a\mu}{r_{0}\left(2a - r_{0}\right)} = \frac{a\mu}{b^{2}} $$ since $ b^{2} = a^{2} - c^{2} = 2ar_{0} - r_{0}^{2} $.

\begin{align*}
\xi &= \frac{r_{0}v^{2} - \mu}{r_{0}v} = \sqrt{\frac{2\mu}{r_{0}} - \frac{\mu}{a}} - \frac{\mu}{r_{0}\sqrt{\frac{2\mu}{r_{0}} - \frac{\mu}{a}}} = \cdots = \sqrt{\frac{2\mu}{r_{0}} - \frac{\mu}{a}} \left(\frac{a-r_{0}}{2a-r_{0}} \right)\\
&= \sqrt{\frac{2\mu}{r_{0}} - \frac{\mu}{a}}\left(\frac{c}{a+c}\right) = \cdots = \sqrt{\frac{\mu c^{2}}{ab^{2}}} = \sqrt{\frac{a\mu}{b^{2}} - \frac{\mu}{a}}  
\end{align*}

Therefore we have that $$\xi = \frac{r_0 v^2-\mu }{r_0 v} = \sqrt{\frac{a\mu}{b^{2}} - \frac{\mu}{a}} \quad \hbox{and}\quad V\cdot V = \frac{\mu ^2}{r_0^2 v^2} = \frac{a\mu}{b^{2}} $$

\section{Best hyperplane}\label{bp}

 Let us assume that we have a collection of points $S=\{p_1,\dots, p_m\}$ in $\mathbb{R}^n$ and that we are interested in finding the best hyperplane $\Pi=\{(x_1,\dots, x_m)\in \mathbb{R}^n: a_1 x_1+\dots+a_n x_n+b=0\}$ that fits these points. A natural way to select this plane is to find the one that minimize the sum of the distances from the points to the plane. If we assume that the vector $a=(a_1,\dots, a_n)$ is a unit vector, then the distance from $p_i$ to the plane $\Pi$ is given by $|p_i\cdot a+b|$ where $u\cdot v$ is the Euclidean dot product. Therefore, the function to minimize is the function  $\sum_{i=1}^n|p_i\cdot a+b|$. For convenience we will be minimizing the function 

$$f(a,b)=\sum_{i=1}^m(p_i\cdot a+b)^2  \quad \hbox{subjected to }  \quad g(a,b)=a\cdot a =1$$

Using Lagrange multiplier, we get the possible minimum happen at points $(a,b,\lambda)\in \mathbb{R}^{n+1}$ such that $\nabla f=\lambda \nabla g$ and $g(a)=1$. A direct computation shows that if $p_i=(x_{i1},x_{i2},\dots, x_{in})$ and we define

$$X_1=(x_{11},\dots, x_{m1})\in \mathbb{R}^m,\, \dots\, X_n=(x_{1n},\dots, x_{mn})\in \mathbb{R}^m\, \hbox{and} \,  u=(1,\dots 1) \in \mathbb{R}^m$$

and the $n\times n$ matrix $M$ with entry $i,j$ given by the dot product $X_i\cdot X_j$, the vector  $v\in \mathbb{R}^n$ with $i^{th}$ entry $X_i\cdot u$, and $L$ the matrix with  entry $i,j$ given by the product $v_i \cdot v_j$then we can rewrite $f$ as follows

\begin{eqnarray*}
f(a,b)&=&\sum_{i=1}^m (p_i\cdot a)^2+2b \sum_{i=1}^m (p_i\cdot a)+mb^2\\
&=&\sum_{i=1}^m (\sum_{j=1}^n x_{ij} a_j)^2+
2b \sum_{i=1}^m \sum_{j=1}^n x_{ij}a_j+mb^2\\
&=& \sum_{i=1}^m \sum_{j=1}^n \sum_{k=1}^n x_{ij}a_jx_{ik} a_k+
2b \sum_{j=1}^n a_j \sum_{i=1}^m  x_{ij} +mb^2\\
&=& a\cdot Ma+mb^2+2b \, a\cdot v \\
\end{eqnarray*}

From the expression above we get that the gradient of $f(a,b)$ is 

$$\nabla f = (2 Ma+2 b v,2 m b+2 a\cdot v)$$

Since $\nabla g=(2a,0)$ then we can write the equations $\nabla f=\lambda \nabla g$ as

$$Ma+b v=\lambda a\quad \hbox{and} \quad m b+a\cdot v=0 $$

Therefore $b=-\frac{1}{m} a\cdot v $ and $Ka=\lambda a$ where $K=M-\frac{1}{m} L$.

\begin{remark}
The arguments above give us the following procedure to compute the best plane that fits the points $S=\{p_1,\dots, p_m\}$ in $\mathbb{R}^n$.

\begin{enumerate}
\item
Compute the vectors $X_1,\dots X_n\in \mathbb{R}^m$, the vector $v\in  \mathbb{R}^n$ and the matrices $M$, $L$, and $K$. Notice that if we view all the vectors as column vectors then $L=vv^T$.
\item
Compute an orthonormal basis $\{w_1,\dots,w_n\}$ that diagonalizes the matrix $K$. Recall that $K$ is a symmetric matrix.
\item
For each $i=1,\dots n$, compute $b_i=-\frac{1}{m} w_i\cdot v$. The equation of the $n$ planes are $\Pi_i=\{x\in \mathbb{R}^n: x\cdot w_i+b_i=0\}$. 
\item
Compute the numbers $f(w_i,b_i)$. If $f(w_k,b_k)\le f(w_i,b_i)$ for all $i$ then $\Pi_k$ is the best plane that fits the collection of points in the set $S$.
\end{enumerate}
\end{remark}

\subsection*{Example 1}

Take the set of $ m = 7 $ points in $ \mathbb{R}^{3} $ to be
$$ S = \lbrace (1, 1, 2), (1, 2.5, 1), (2, 1.5, 1), (3, 2, 1), (3, 3, 2), (4, 3, 2), (1, 0.75, 1) \rbrace $$ 
A direct computation gives us

$$ 
M = 
\begin{pmatrix}
41    && 34.25   && 23\\
34.25 && 32.0625 && 20.75\\
23    && 20.7    && 16
\end{pmatrix} \quad
L = 
\begin{pmatrix}
225    && 206.25  && 150\\
206.25 && 189.063 && 137.5\\
150    && 137.5   && 100
\end{pmatrix}
$$
$$
K = 
\begin{pmatrix}
62/7 & 4.78571 & 11/7\\
4.78571 & 5.05357 & 1.10714\\
11/7 & 1.10714 & 12/7
\end{pmatrix}
$$

where the eigenvalues and eigenvectors of K are

\begin{center}
\begin{tabular}{ l l }
 $ \xi_{1} = 12.449 $ & \quad $ w_{1} = (-0.814171, -0.553232, -0.176239) $ \\ 
 $ \xi_{2} = 1.80811 $ & \quad $ w_{2} = (0.571409, -0.817313, -0.0741071) $ \\  
 $ \xi_{3} = 1.36767 $ & \quad $ w_{3} = (-0.103044, -0.161041, 0.981554) $   
\end{tabular}
\end{center}

Computing our constant terms $ b_{i} $ we get three candidates for the plane of best fit

\begin{flalign*}
\quad\quad\quad& \Pi_{1}: 3.08313 - 0.814171 x - 0.553232 y - 0.176239 z = 0 &\\
& \Pi_{2}: 0.486855 + 0.571409 x - 0.817313 y - 0.0741071 z = 0 &\\
& \Pi_{3}: -0.865081 - 0.103044 x - 0.161041 y + 0.981554 z = 0 &
\end{flalign*}

For these values of $w_i$  and $b_i$, $ f(w_{1}, b_{1}) \approx 12.4492 $, $ f(w_{2}, b_{2}) \approx 1.80811 $ and $ f(w_{3}, b_{3}) \approx 1.36767 $.  Therefore our plane of best fit is $ \Pi_{3} $.

\begin{figure}[ht!]
  \centering
  \begin{subfigure}[b]{1\linewidth}
    \includegraphics[width=\linewidth]{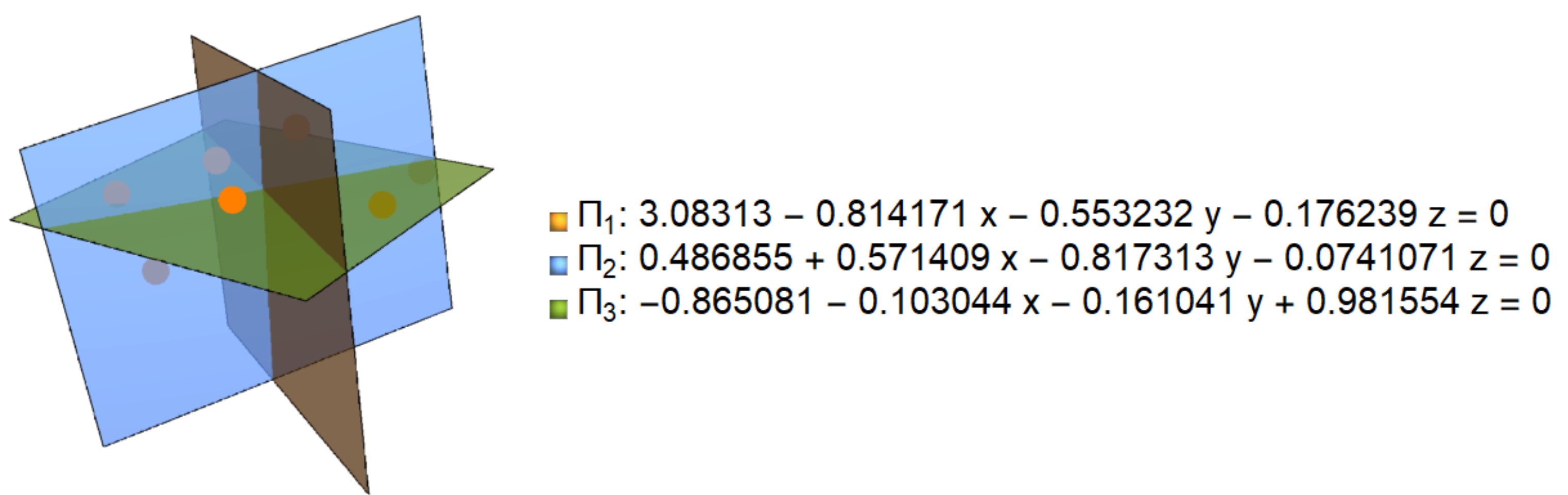}
  \end{subfigure}
  \caption{$ \Pi_{1}, \Pi_{2},\Pi_{3} $ divide $ \mathbb{R}^{3} $ into eight octants.  In this case $\Pi_3$ is the best plane.}
\end{figure}

\section{The best ellipse}\label{be}

We will be considering two approaches for finding the best fitting ellipse given a set of points in 
$S=\{(x_1,y_1),\dots ,(x_m,y_m)\}$ in $ \mathbb{R}^{2} $. The first one uses only the set of points under the small assumption that the best ellipse does not contain the origin, and the second assumes that we know the eccentricity of the ellipse and that one of  the semi-axes is horizontal. For both procedures we write the desired equation of the ellipse in the form $g(x,y,A,B,\dots)=0$ where $A,B,\dots$ are coefficients of a polynomial of order 2 in the variables $(x,y)$.  In the case where all the points of $ S $ are in an ellipse and the values for $A,B,\dots$ are such that $g(x,y,A,B,\dots) = 0$ describes the ellipse for the initial points of $ S $, then the function 
$$f(A,B,\dots)=\sum_{i=1}^m g(x_i,y_i,A,B\dots)^2$$
is zero.  In the case that the set of points is not an ellipse, for any $A,B\dots$ the function $f$ is greater than zero.  So $f$ can be taken as a distance that describes how far the ellipse with equation $g(x,y,A,B,\dots)=0$ is from the ellipse that best fits the data set of points.  With this observation in mind, we find our best ellipse my minimizing the function $f(A,B,\dots)$.  

\subsection*{Approach 1} 
In this case we will use the fact that any ellipse that does not contain the origin can be described with the equation $g=0$ where $ g(x,y,A,B,C,D,E)=Ax^{2} + Bxy + Cy^{2} + Dx + Ey -1 $, in this case,
$$ f(A, B, C, D, E) = \sum_{i=1}^{m}(Ax_{i}^{2} + Bx_{i}y_{i} + Cy_{i}^{2} + Dx_{i} + Ey_{i} - 1)^{2} $$
 
Expanding $ f $ and taking partial derivatives yields  

\begin{align*}
\frac{\partial f}{\partial A} &= 2A(X_{4}Y_{0}) + 2B(X_{3}Y_{1}) + 2C(X_{2}Y_{2}) + 2D(X_{3}Y_{0}) + 2E(X_{2}Y_{1}) - 2(X_{2}Y_{0})\\
\frac{\partial f}{\partial B} &= 2A(X_{3}Y_{1}) + 2B(X_{2}Y_{2}) + 2C(X_{1}Y_{3}) + 2D(X_{2}Y_{1}) + 2E(X_{1}Y_{2}) + -2(X_{1}Y_{1})\\
\frac{\partial f}{\partial C} &= 2A(X_{2}Y_{2}) + 2B(X_{1}Y_{3}) + 2C(X_{0}Y_{4}) + 2D(X_{1}Y_{2}) + 2E(X_{0}Y_{3}) - 2(X_{0}Y_{2})\\
\frac{\partial f}{\partial D} &= 2A(X_{3}Y_{0}) + 2B(X_{2}Y_{1}) + 2C(X_{1}Y_{2}) + 2D(X_{2}Y_{0}) + 2E(X_{1}Y_{1}) - 2(X_{1}Y_{0})\\
\frac{\partial f}{\partial E} &= 2A(X_{2}Y_{1}) + 2B(X_{1}Y_{2}) + 2C(X_{0}Y_{3}) + 2D(X_{1}Y_{1}) + 2E(X_{0}Y_{2}) - 2(X_{0}Y_{1})
\end{align*}

such that $ X_{j}Y_{k} = \sum_{i=1}^{m}x_{i}^{j}y_{i}^{k} $.

Setting each equation equal to zero yields the matrix equation $ Ax = b $ where

$$ A = \begin{pmatrix}
X_{4}Y_{0} & X_{3}Y_{1} & X_{2}Y_{2} & X_{3}Y_{0} & X_{2}Y_{1}\\
X_{3}Y_{1} & X_{2}Y_{2} & X_{1}Y_{3} & X_{2}Y_{1} & X_{1}Y_{2}\\
X_{2}Y_{2} & X_{1}Y_{3} & X_{0}Y_{4} & X_{1}Y_{2} & X_{0}Y_{3}\\
X_{3}Y_{0} & X_{2}Y_{1} & X_{1}Y_{2} & X_{2}Y_{0} & X_{1}Y_{1}\\
X_{2}Y_{1} & X_{1}Y_{2} & X_{0}Y_{3} & X_{1}Y_{1} & X_{0}Y_{2}
\end{pmatrix} \quad x = \begin{pmatrix}
A\\B\\C\\D\\E
\end{pmatrix} \quad b = \begin{pmatrix}
X_{2}Y_{0}\\X_{1}Y_{1}\\X_{0}Y_{2}\\X_{1}Y_{0}\\X_{0}Y_{1}
\end{pmatrix} $$

Notice that $ A $ is a symmetric matrix.  Assuming the $ Det(A) \neq 0 $ gives us a unique solution $ x = A^{-1}b $. 

\subsection*{Approach 2}

Assume we have a set of points in $ \mathbb{R}^{2} $ that approximate an axis aligned ellipse with eccentricity $ e $ that does not contain the origin.  Then a direct computation shows that we can look for an ellipse of the form $$ (1-e^{2})Bx^{2} + By^{2} + Cx + Dy - 1 = 0 $$

Using the same notation and following the same procedure as in approach 1 we have that $ Ax  = b $ where

$$ A = \begin{pmatrix}
X_{4}Y_{0}(1-e^{2})^{2} + 2X_{2}Y_{2}(1-e^{2}) + X_{0}Y_{4} & X_{3}Y_{0}(1-e^{2}) + X_{1}Y_{2} & X_{2}Y_{1}(1-e^{2}) + X_{0}Y_{3}\\
X_{3}Y_{0}(1-e^{2}) + X_{1}Y_{2} & X_{2}Y_{0} & X_{1}Y_{1}\\
X_{2}Y_{1}(1-e^{2}) + X_{0}Y_{3} & X_{1}Y_{1} & X_{0}Y_{2} \end{pmatrix} $$

$$ x = \begin{pmatrix}
B\\C\\D
\end{pmatrix} \quad b = \begin{pmatrix}
X_{2}Y_{0}(1-e^{2}) + X_{0}Y_{2}\\
X_{1}Y_{0}\\
X_{0}Y_{1}
\end{pmatrix} $$

Notice that $ A $ is a symmetric matrix.  Assuming the $ Det(A) \neq 0 $ gives us a unique solution $ x = A^{-1}b $. 

\subsection*{Example of Approach 1}

Take the set of $ m = 7 $ points in $ \mathbb{R}^{2} $ to be
$$ S = \lbrace (2, 1), (2, 4), (3, 1), (3, 6), (4, 2), (5, 4), (5, 6) \rbrace $$ 
A direct computation gives us

$$ A = 
\begin{pmatrix}
1700 && 1607 && 1765 && 384 && 365\\ 
1607 && 1765 && 2213 && 365 && 421\\ 
1765 && 2213 && 3122 && 421 && 570\\
384  && 365  && 421  && 92  && 89\\
365  && 421  && 570  && 89  && 110
\end{pmatrix} \quad b = \begin{pmatrix}
92\\89\\110\\24\\24
\end{pmatrix}
$$

where $ Det(A) = 57,566,592 $ and therefore has a unique inverse.  Therefore we have as a solution 

$$ x = A^{-1}b \approx \begin{pmatrix}
-0.15008\\0.0885364\\-0.0512472\\0.684014\\0.0894444
\end{pmatrix} $$

Thus the equation of our best fitting ellipse is given by 

$$ - 0.15008 x^2 + 0.0885364 xy - 0.0512472 y^2 + 0.684014 x + 0.0894444 y - 1 =0 $$

\begin{figure}[h!]
  \includegraphics[width=.4\linewidth]{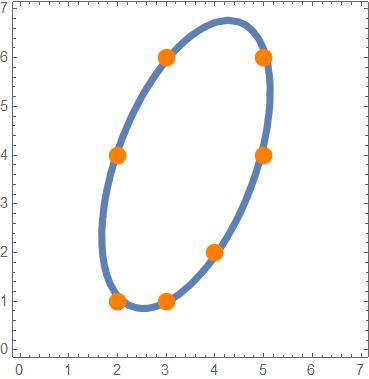}
  \caption{}
\end{figure}

\subsection*{Example of Approach 2}

While in the celestial-mechanic application that we will be explaining later in this paper we will use other methods to compute the eccentricity and angle the major axis makes with the positive x-axis, to illustrate Approach 2 we use our result from Approach 1 as a starting point.  Using the equation we found we directly compute the eccentricity and angle the major axis of our ellipse makes with the positive x-axis giving us $$ e \approx 0.891353, \quad \theta \approx 1.205545 $$

For each point $ p_{i} \in S $ we perform a change of basis by computing $ (p_{i} \cdot (\cos\theta, \sin\theta), p_{i} \cdot (-\sin\theta, \cos\theta)) $ giving us $$ S' = \{(1.6484, -1.51088), (4.4505, -0.439331), (2.00559, -2.44492), $$ $$ 
(6.67576, -0.658996), (3.2968, -3.02177), (5.52206, -3.24143),
(7.39012, -2.52706)\} $$

A direct computation shows

$$ A = 
\begin{pmatrix}
885.103 && 361.556 && -162.33\\
361.556 && 167.088 && -60.2855\\
-162.33 && -60.2855 && 34.9117
\end{pmatrix} \quad b = \begin{pmatrix}
69.2468\\30.9892\\-13.8444
\end{pmatrix}
$$

where $ Det(A) = 56113.6 $ and therefore has a unique inverse.  Thus the solution is

$$ x = A^{-1}b \approx \begin{pmatrix}
-0.167008\\0.327863\\-0.606944
\end{pmatrix} $$

The equation of our best fitting ellipse is given by 

$$ - 0.0343186 x^2 + 0.327863 x - 0.167008 y^2 - 0.606944 y  - 1 = 0 $$

\begin{figure}[h!]
  \includegraphics[width=.6\linewidth]{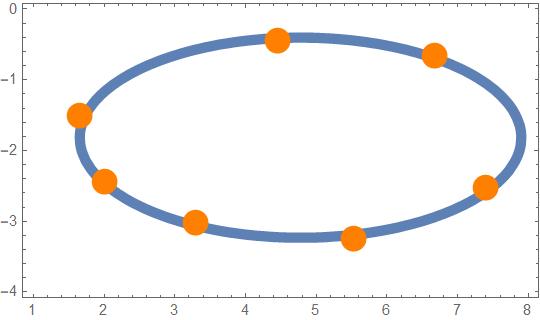}
  \caption{}
\end{figure}

Note that if we wish to have the equation of this ellipse model our original (non-axis aligned) set of points, we simply use the angle $\theta$ found above and replace $ x \rightarrow x\cos\theta + y\sin\theta $ and $ y \rightarrow -x\sin\theta + y\cos\theta $ giving us the equation found above in the example of approach 1 $$ - 0.15008 x^2 + 0.0885364 xy - 0.0512472 y^2 + 0.684014 x + 0.0894444 y - 1 =0 $$

\subsection{Best ellipse for a collection of points in $\mathbb{R}^3$} We can combine the procedure of finding the plane of best fit and the ellipse of best fit to find the best ellipse for a given collection of points in $\mathbb{R}^3$.  Given a collection of points $S=\{p_1,\dots, p_m\}$ in $\mathbb{R}^3$ we find the plane of best fit $\Pi$. Letting $n$ be the unit normal vector to this plane and $E_1$, $E_2$ two orthonormal vectors perpendicular to $n$ we now consider the set of points in the space $S^\prime=\{p^\prime_1,\dots p^\prime_m\}$ where $p^\prime_i=(p_i\cdot E_1,p_i\cdot E_1)$. Essentially the points $p^\prime_i$ are the coordinates of the projection of the points $p_i$ on a plane $\Pi^\prime$ that contains the origin and is parallel to the plane $\Pi$. We can now use the procedure to find the best ellipse with the planar points $p^\prime_i$, and after, we use the relation between $\Pi$, $\Pi^\prime$ to find the ellipse in $\mathbb{R}^3$ that best fits the initial collections of points. We now show an example of this procedure.

Take a semi-random set of points in $\mathbb{R}^{3}$. Take the set of $m = 10$ points in $\mathbb{R}^{3}$ to be 
\begin{align*}
S &= \lbrace(2, 3.37, 3.45), (0.3, 2.07, 2.26), (0.61,1.29, -0.27), (1.24, 1.17, -0.55), (2.99, 2.04, 0.95),\\
  &\quad \quad (3.48, 3, 3.56), (2.58, 3.8, 2.89), (0.14, 1.49, 1.48), (0.06, 0.56, 0.96), (0.64, 0.57, 0.62) \rbrace
\end{align*}

Using the procedure for the plane of best fit we find $$ \Pi: 0.332541 x - 0.840721 y + 0.427323 z + 0.504806 = 0 $$ with $n = (0.332541, - 0.840721, 0.427323)$ and $ d = 0.504806$ 

Now letting $\Pi'$ be the parallel plane through the origin, we compute basis vectors
\begin{align*}
E1 &= (0.943089, 0.296445, -0.150678)\\
E2 &= (1.02198 \cdot 10^{-17}, 0.453111, 0.891454)
\end{align*}

Projecting our set of points onto the basis vectors and using our best ellipse procedure gives us $$ \alpha (t) = (2.05437, 2.49305) + 2.34605\cos t (\cos u,\sin u) + 1.5626\sin t (-\sin u, \cos u) $$ where $u = -2.02649$ and $t \in [0, 2\pi)$.  

Note that $\alpha (t) \in \mathbb{R}^{2}$ so our final step is to put it back onto the plane of best fit $\Pi$ through the original set of points in $\mathbb{R}^{3}$.

Taking  
\begin{align*}
p_{0} &= (-0.167869, 0.424401, -0.215715)\\
\eta_{1} &= 2.05437 - 1.03246 \cos t + 1.40315 \sin t\\
\eta_{2} &= 2.49305 - 2.10665 \cos t - 0.687677 \sin t
\end{align*}

where $p_{0} = -dn$ and $\eta_{1}, \eta_{2}$ are the first and second entries of $\alpha (t)$ gives us $$ \phi (t) = p_{0} + \eta_{1}E_{1} + \eta_{2}E_{2} \in \mathbb{R}^{3} \quad \hbox{for} \quad t \in [0, 2\pi) $$ as our ellipse of best fit through the set of points $p_{i}$ projected onto the plane $\Pi$.  See Figure \ref{fig:ellipsePlane}.

\begin{figure}[h!]
  \includegraphics[width=.55\linewidth]{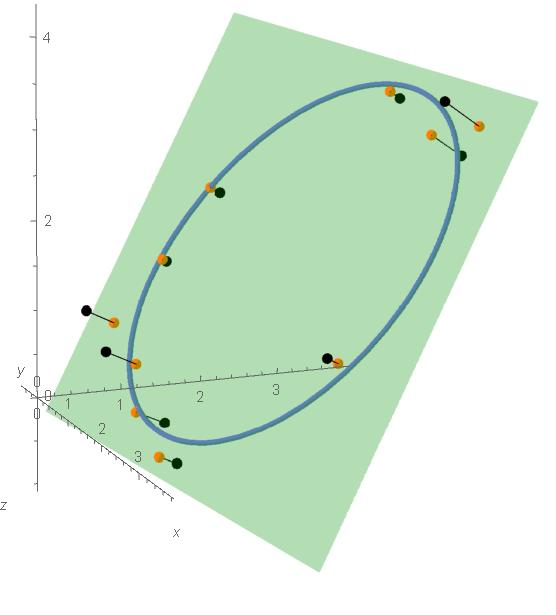}
  \caption{Image showing our plane $\Pi$, ellipse $\phi$, and points $p_{i}$ with their projections}
  \label{fig:ellipsePlane}
\end{figure}

\section{Procedure for getting the ellipse}\label{p}

First we explain the full procedure for finding the parametrization for Earth.  After that we list the results for Mercury, Venus, Earth, Mars, and Jupiter all together.

Since we are finding equations to model relations, we must define our coordinate system that is being used.  We use the settings given to us in the Horizons Web Interface.  The origin of our coordinate system is taken to be the Solar System Barycenter.  This is the center of mass of our solar system.  

The reference plane used is the ecliptic and mean equinox of reference epoch, and the reference system is ICRF/J2000.0 Documentation regarding reference frames and coordinate systems can be found on the Horizon documentation page \url{https://ssd.jpl.nasa.gov/?horizons_doc#frames}

\subsection{Downloading and importing data}

Navigating to the NASA HORIZONS Web-Interface at \url{https://ssd.jpl.nasa.gov/horizons.cgi} we download the data for the Earth and the Sun using the following settings  

\begin{itemize}
\item Ephemeris Type: \textbf{Vectors}
\item Target Body: \textbf{[Your Choice]}
\item Coordinate Origin: \textbf{Solar System Barycenter}
\item Time Span: Start=\textbf{2020-01-01}, Stop=\textbf{2020-12-31}, Step=\textbf{1 d}
\item Table Settings: quantities code=\textbf{2}; output units=\textbf{KM-S}; CSV format=\textbf{YES}
\item Display/Output: \textbf{download/save} (plain text file)
\end{itemize}

This provides us with two files (one for the sun, and the other for earth) containing the date of each observation, along with the $x$, $y$, $z$ position coordinates, and $vx$, $vy$, $vz$ velocity coordinates respectively.

We now consider the relative positions and velocities of the earth with respect to the sun taking $i = 1, 2, \ldots, 366$ giving us one full rotation around the sun.
\begin{align*}
p_{i} &= (x, y, z)_{Earth} - (x, y, z)_{Sun}\\
v_{i} &= (vx, vy, vz)_{Earth} - (vx, vy, vz)_{Sun}
\end{align*}

For what follows, for each planet we pick dates that correspond with a single period/year, or one full rotation around the sun.

\subsection{Plane of best fit and projecting points into $ \mathbb{R}^{2} $}

With a table in Mathematica now containing our relative position and velocity vectors, we are able to use the procedure outlined in Section \ref{bp} to compute the plane of best fit $ \Pi: ax + by + cz + d = 0 $ with normal vector $ n = (a, b, c) $.

Using our plane of best fit we map our points from $ \mathbb{R}^{3} $ into $ \mathbb{R}^{2} $ using the following procedure

\begin{enumerate}
\item Using the normal vector $ n = (a, b, c) $ from our plane of best fit, we consider the plane $ \Pi ' $ with normal vector $ n $ such that $ \Pi ' $ contains the origin.  This plane is given by the equation $ ax + by + cz = 0 $
\item Take the orthonormal basis for $ \Pi ' $ given by \begin{align*}
E_{1} &= \frac{e_1 - (e_1 \cdot n) n}{\lVert e_1 - (e_1 \cdot n) n\rVert}\\
E_{2} &= \frac{n \times E_{1}}{\lVert n \times E_{1} \rVert}
\end{align*} 
Notice that $ E_{1} $ is the unit vector of the projection of $ e_1 = (1, 0, 0) $ into $ \Pi ' $ and $ Span\{E_{1}, E{2}\} = \Pi ' $.  
\item We now project each point onto our basis vectors $ E_{1} $ and $ E_{2} $ giving us a new set of coordinates $$ p_{i}' = (p_{i}\cdot E_{1}, p_{i}\cdot E_{2}) \quad \hbox{and}\quad v_{i}' = (v_{i}\cdot E_{1}, v_{i}\cdot E_{2}) $$
\end{enumerate}

\subsection{Ellipse of best fit and eccentricity}

Let us temporarily assume that our planet moves according to the solution of the two-body problem in a perfect ellipse with major semi axis $\tilde{a}$ and minor semi-axis $\tilde{b}$ parallel to the direction given by the unit vector $e_2$. According to our results from Section \ref{bp} we know that if $\tilde{\xi}=\sqrt{\frac{\tilde{a}\mu}{\tilde{b}^{2}} - \frac{\mu}{\tilde{a}}}$ and $\tilde{V}_i=v_{i}' -\tilde{\xi} e_2$, then $\tilde{d_i} = \tilde{V}_i\cdot \tilde{V}_i=\frac{\tilde{a}\mu}{\tilde{b}^{2}}$ for all $i$, in particular

$$  \sum_{i=1}^{m}(\tilde{V}_i\cdot \tilde{V}_i - \frac{\tilde{a}\mu}{\tilde{b}^{2}})^{2} = 0 $$  

and therefore, in this hypothetical case that the motion moves according to the solution of the two-body problem, we can find three real numbers $\tilde{d}$, $\tilde{\xi}$ and $\tilde{u}$ such that the function

$$ f(\tilde{\xi},\tilde{d},\tilde{u})= \sum_{i=1}^{m}((v_{i}' -\tilde{\xi} (\cos \tilde{u},sin \tilde{u}))\cdot (v_{i}' -\tilde{\xi} (\cos \tilde{u},sin \tilde{u})) - \tilde{d})^2 = 0 $$  

for all $v_i$.  Remember that the ${v_i}'$ come from the data taken from Nasa, and in the hypothetical case we are assuming they will satisfy the relations above taken from the hodograph theorem. With this observation in mind, we use the function $f$, which is a nonnegative  function to measure how far the real motion of the planet is from the motion of the theoretical ellipse.  We will minimize the function $f$ using Mathematica. Since the function $f$ may have several local minima, we need to search for a minimum of $f$ near the the point $(\tilde{\xi},\tilde{d},\tilde{u})\approx(\xi_r,d_r,0)$ with
 
 $$\xi_r=\sqrt{\frac{a_r\mu}{b_r^{2}} - \frac{\mu}{a_r}}\quad\hbox{and}\quad d_r=\frac{a_r\mu}{b_r^2}$$
 
 where,
 
 $$a_r=\frac{dM+dm}{2}\hbox{ and } b_r=\sqrt{a_r^2-(a_r-dm)^2}=\sqrt{dm\cdot dM}$$ 
 
 with  $dM$ and $dm$ taken to be the maximum and minimum magnitudes of our vectors $p_{i}$.  Note that the subscript $_r$ notation used above represent our first rough estimates. 
So with our starting estimates in hand, we will use a gradient descent algorithm in Mathematica to find $ \xi^* $, $ d^* $ , and $u^*$ that minimize the function $f(\tilde{\xi},\tilde{d},\tilde{u})$.
Once we have our optimal $\xi^*$, $u^*$, and $d^*$ we can solve the system 
\begin{align*}
\xi^* &= \sqrt{\frac{a^*\mu}{{b^*}^{2}} - \frac{\mu}{a^*}}\\
d^*   &= \frac{a^*\mu}{{b^*}^{2}}
\end{align*}
for $a^*$ and $b^*$.  At this point the eccentricity is easily computed $ e = \frac{\sqrt{{a^*}^{2} - {b^*}^{2}}}{a^*} $.  Taking the angle $u^*$ given by the gradient descent algorithm we perform one final change of coordinates given by 

$$ p_{i}'' = (p_{i}'\cdot (\cos u^*, \sin u^*), p_{i}'\cdot (-\sin u^*, \cos u*)) $$  

Recall that $(\cos u^*, \sin u^*)$ is perpendicular to the major axis, so that we end up with a set of coordinates that represents an axis-aligned ellipse (no rotation with respect to the positive x-axis).  We can now use the ellipse of best fit Approach 2 explained in Section \ref{be} to get the equation of the ellipse that represents the orbit of our planet with the specified eccentricity.

We now have an equation of the form $$ Ax^{2} +Bx + Cy^{2} + Dy - 1 = 0 $$ for some constants $A, B, C, D \in \mathbb{R} $.  Note that if we wish to have the equation of this ellipse model our original (non-axis aligned) set of points, we simply use the angle $u$ found above and replace $ x \rightarrow x\cos u^* + y\sin u^* $ and $ y \rightarrow -x\sin u^* + y\cos u^* $ 

We will now compute the parametrization  of the original, rotated ellipse.  First we complete the square of $ Ax^{2} +Bx + Cy^{2} + Dy - 1 = 0 $ giving us
$$ \frac{\left( x + \frac{B}{2A} \right)^{2}}{\left( \frac{AD^{2} + CB^{2} + 4AC}{4A^{2}C} \right)} + \frac{\left( y + \frac{D}{2C} \right)^{2}}{\left( \frac{AD^{2} + CB^{2} + 4AC}{4AC^{2}} \right)} = 1 $$

Therefore 
\begin{align*}
h &= (\frac{-B}{2A}, \frac{-D}{2C}) \cdot (\cos u, -\sin u)\\
k &= (\frac{-B}{2A}, \frac{-D}{2C}) \cdot (\sin u, \cos u)\\
a^* &= \sqrt{\frac{AD^{2} + CB^{2} + 4AC}{4A^{2}C}}\\
b^* &= \sqrt{\frac{AD^{2} + CB^{2} + 4AC}{4AC^{2}}}
\end{align*}

and our parametrization is $$ \alpha(t) = (h, k) + a^*\cos t (\cos u, \sin u) + b^* \sin t (-\sin u, \cos u) $$ for $ t \in [0, 2\pi) $.

We now take this parametrization $\alpha(t)$ in $\mathbb{R}^{2}$ and put it back into $\mathbb{R}^{3}$ onto its original position in the plane of best fit.  Recall that our plane of best fit is $ \Pi: ax + by + cz + d = 0 $ with normal vector $ n = (a, b, c) $.  We would like to translate our points from $ \Pi': ax + by + cz = 0 $ back to $ \Pi $.  To do this we need to find the point on $\Pi$ that corresponds to the origin of $\Pi'$.  In other words, where the vector $n$ intersects $\Pi$.  

Letting $p_{0} = tn = t(a, b, c)$ for some $t \in \mathbb{R}$ and substituting $p_{0}$ into the equation for $\Pi$ gives us $ (a^{2} + b^{2} + c^{2})t + d = 0 $ and since $ n $ is a unit vector $t = -d$.  Thus $p_{0} = -dn = -d(a, b, c)$.

Therefore the parametrization of the best fitting ellipse through our original set of points $p_{i} \in \mathbb{R}^{3}$ is given by $$ \phi(t) = p_{0} + \eta_{1}E_{1} + \eta_{2}E_{2} $$ where $E_{1}, E_{2}$ are defined as above and 
\begin{align*}
\eta_{1} &= h + a^*\cos t \cos u - b^*\sin t\sin u\\
\eta_{2} &= k + a^*\cos t \sin u + b^*\sin t\cos u
\end{align*}

\section{Results}\label{r}
The number of data points used for each planet is the number of days in one period (orbit around the sun).  The date range of the data used is given next to the planet name.  All results are in kilometers.

\begin{tabular}{ | l || c | }
\multicolumn{2}{l}{{\large \textbf{Mercury}} (2021.01.01 - 2021.03.29)} \\
\hline
 \textbf{Semi-major axis} & $5.79523\cdot 10^7$ km \\ 
 \textbf{Semi-minor axis} & $5.67138\cdot 10^7$ km \\  
 \textbf{Eccentricity} & $0.205637$    \\
 \textbf{Parametrization} & $ \phi (t) = p_{0} + \eta_{1}E_{1} + \eta_{2}E_{2} \in \mathbb{R}^{3} \quad \hbox{for} \quad t \in [0, 2\pi) $ \\
 \hline 
 \multicolumn{2}{| l |}{$E1 = (0.995847, 0.00741525, -0.0907438)$}\\ 
 \multicolumn{2}{| l |}{$E2 = (1.17303\cdot 10^{-18}, 0.996678, 0.0814449)$}\\ 
 \multicolumn{2}{| l |}{$p_{0} = (-0.552638, 0.492306, -6.02457)$}\\ 
 \multicolumn{2}{| l |}{$\eta_{1} = -2.64667\cdot 10^6 - 5.65296\cdot 10^7 \cos t - 1.24896\cdot 10^7 \sin t$}\\ 
 \multicolumn{2}{| l |}{$\eta_{2} = -1.1722\cdot 10^7 + 1.27623\cdot 10^7 \cos t - 5.53214\cdot 10^7 \sin t$} \\
 \hline
\end{tabular}

\begin{tabular}{ | l || c | }
\multicolumn{2}{l}{{\large \textbf{Venus}} (2020.01.01 - 2020.08.12)} \\
\hline
 \textbf{Semi-major axis} & $1.08209\cdot 10^8$ km \\ 
 \textbf{Semi-minor axis} & $1.08207\cdot 10^8$ km \\  
 \textbf{Eccentricity} & $0.00675998$    \\
 \textbf{Parametrization} & $ \phi (t) = p_{0} + \eta_{1}E_{1} + \eta_{2}E_{2} \in \mathbb{R}^{3} \quad \hbox{for} \quad t \in [0, 2\pi) $ \\
 \hline 
 \multicolumn{2}{| l |}{$E1 = (0.998339, 0.000790377, -0.0576002)$}\\ 
 \multicolumn{2}{| l |}{$E2 = (-3.64763\cdot 10^{-20}, 0.999906, 0.0137205)$}\\ 
 \multicolumn{2}{| l |}{$p_{0} = (-1.55863, 0.370618, -27.0095)$}\\ 
 \multicolumn{2}{| l |}{$\eta_{1} = 484657. + 8.10391\cdot 10^7 \cos t - 7.17057\cdot 10^7 \sin t$}\\ 
 \multicolumn{2}{| l |}{$\eta_{2} = -548351. + 7.17073\cdot 10^7 \cos t + 8.10373\cdot 10^7 \sin t$} \\
 \hline
\end{tabular}

\begin{tabular}{ | l || c | }
\multicolumn{2}{l}{{\large \textbf{Earth}} (2020.01.01 - 2020.12.31)} \\
\hline
 \textbf{Semi-major axis} & $1.49598\cdot 10^8$ km \\ 
 \textbf{Semi-minor axis} & $1.49577\cdot 10^8$ km \\  
 \textbf{Eccentricity} & $0.0167143$    \\
 \textbf{Parametrization} & $ \phi (t) = p_{0} + \eta_{1}E_{1} + \eta_{2}E_{2} \in \mathbb{R}^{3} \quad \hbox{for} \quad t \in [0, 2\pi) $ \\
 \hline 
 \multicolumn{2}{| l |}{$E1 = (1., -1.29884\cdot 10^{-10}, -2.80132\cdot 10^{-6})$}\\ 
 \multicolumn{2}{| l |}{$E2 = (-1.93088\cdot 10^{-27}, 1., -0.0000463654)$}\\ 
 \multicolumn{2}{| l |}{$p_{0} = (0.0000523882, 0.000867092, 18.7013)$}\\ 
 \multicolumn{2}{| l |}{$\eta_{1} = 562946. + 1.45758\cdot 10^8 \cos t - 3.36704\cdot 10^7 \sin t$}\\ 
 \multicolumn{2}{| l |}{$\eta_{2} = -2.43893\cdot 10^6 + 3.36751\cdot 10^7 \cos t + 1.45738\cdot 10^8 \sin t$} \\
 \hline
\end{tabular}

\begin{tabular}{ | l || c | }
\multicolumn{2}{l}{{\large \textbf{Mars}} (2019.01.01 - 2020.11.17)} \\
\hline
 \textbf{Semi-major axis} & $2.27948\cdot 10^8$ km \\ 
 \textbf{Semi-minor axis} & $2.26951\cdot 10^8$ km \\  
 \textbf{Eccentricity} & $0.0934294$    \\
 \textbf{Parametrization} & $ \phi (t) = p_{0} + \eta_{1}E_{1} + \eta_{2}E_{2} \in \mathbb{R}^{3} \quad \hbox{for} \quad t \in [0, 2\pi) $ \\
 \hline 
 \multicolumn{2}{| l |}{$E1 = (0.999699, 0.0005137, -0.0245161)$}\\ 
 \multicolumn{2}{| l |}{$E2 = (1.17962\cdot 10^{-19}, 0.999781, 0.020949)$}\\ 
 \multicolumn{2}{| l |}{$p_{0} = (5.00288, -4.27273, 203.914)$}\\ 
 \multicolumn{2}{| l |}{$\eta_{1} = -1.95162\cdot 10^7 + 9.21814\cdot 10^7 \cos t - 2.07565\cdot 10^8 \sin t$}\\ 
 \multicolumn{2}{| l |}{$\eta_{2} = 8.62918\cdot 10^6 + 2.08477\cdot 10^8 \cos t + 9.17782\cdot 10^7 \sin t$} \\
 \hline
\end{tabular}

\begin{tabular}{ | l || c | }
\multicolumn{2}{l}{{\large \textbf{Jupiter}} (2009.01.01 - 2020.11.09)} \\
\hline
 \textbf{Semi-major axis} & $7.7827\cdot 10^8$ km \\ 
 \textbf{Semi-minor axis} & $7.77341\cdot 10^8$ km \\  
 \textbf{Eccentricity} & $0.04884251$    \\
 \textbf{Parametrization} & $ \phi (t) = p_{0} + \eta_{1}E_{1} + \eta_{2}E_{2} \in \mathbb{R}^{3} \quad \hbox{for} \quad t \in [0, 2\pi) $ \\
 \hline 
 \multicolumn{2}{| l |}{$E1 = (0.99975, -0.0000929091, -0.0223711)$}\\ 
 \multicolumn{2}{| l |}{$E2 = (1.0625\cdot 10^{-20}, 0.999991, -0.00415306)$}\\ 
 \multicolumn{2}{| l |}{$p_{0} = (-8.40464, -1.55987, -375.592)$}\\ 
 \multicolumn{2}{| l |}{$\eta_{1} = -3.68265\cdot 10^7 - 1.92357\cdot 10^8 \cos t - 7.53224\cdot 10^8 \sin t$}\\ 
 \multicolumn{2}{| l |}{$\eta_{2} = -9.41837\cdot 10^6 + 7.54124\cdot 10^8 \cos t - 1.92127\cdot 10^8 \sin t$} \\
 \hline
\end{tabular}

\begin{figure}[h!]
  \includegraphics[width=.8\linewidth]{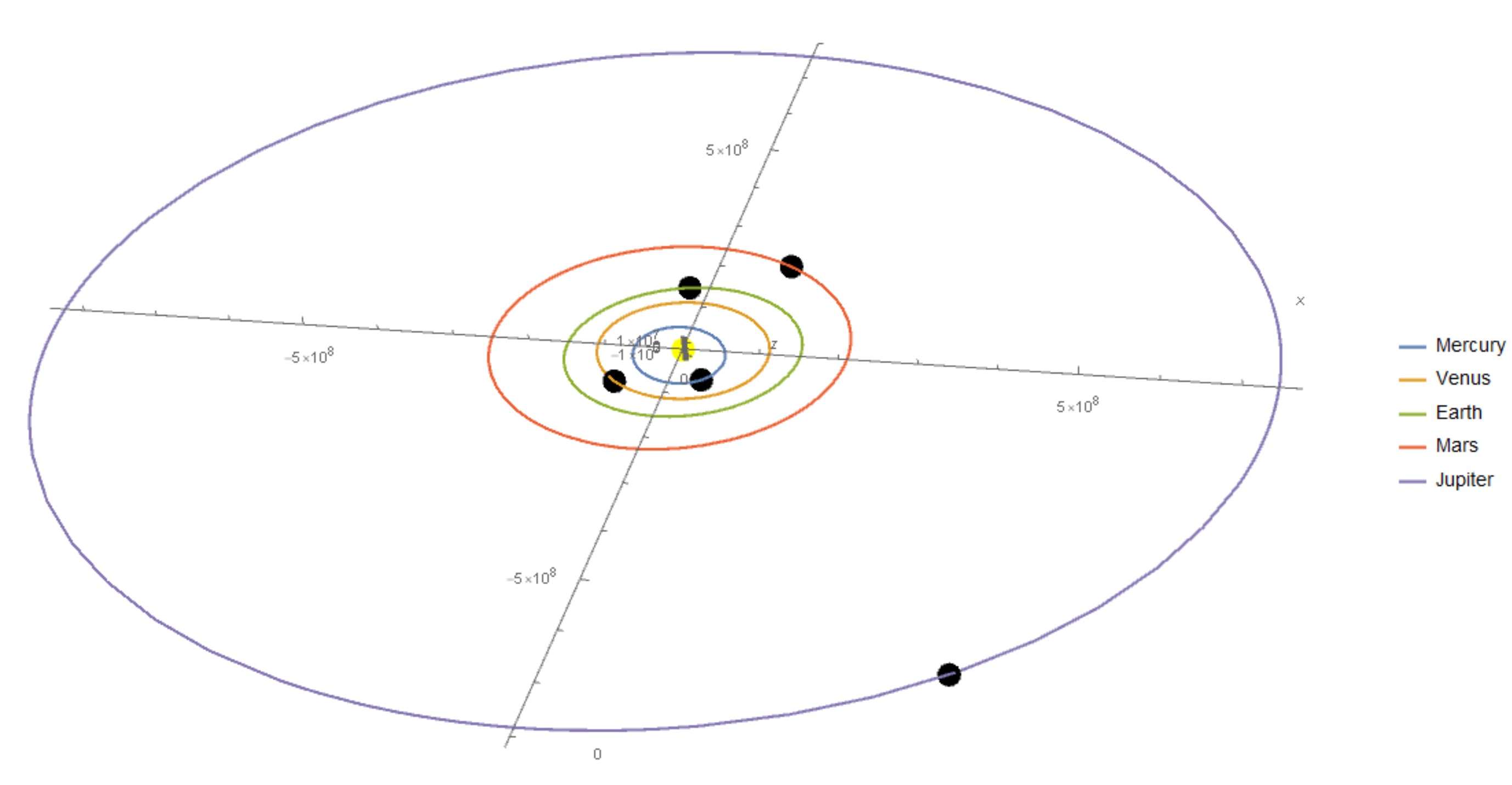}
  \caption{Angled view: Positions on January 1, 2021}
\end{figure}

\begin{figure}[h!]
  \includegraphics[width=.8\linewidth]{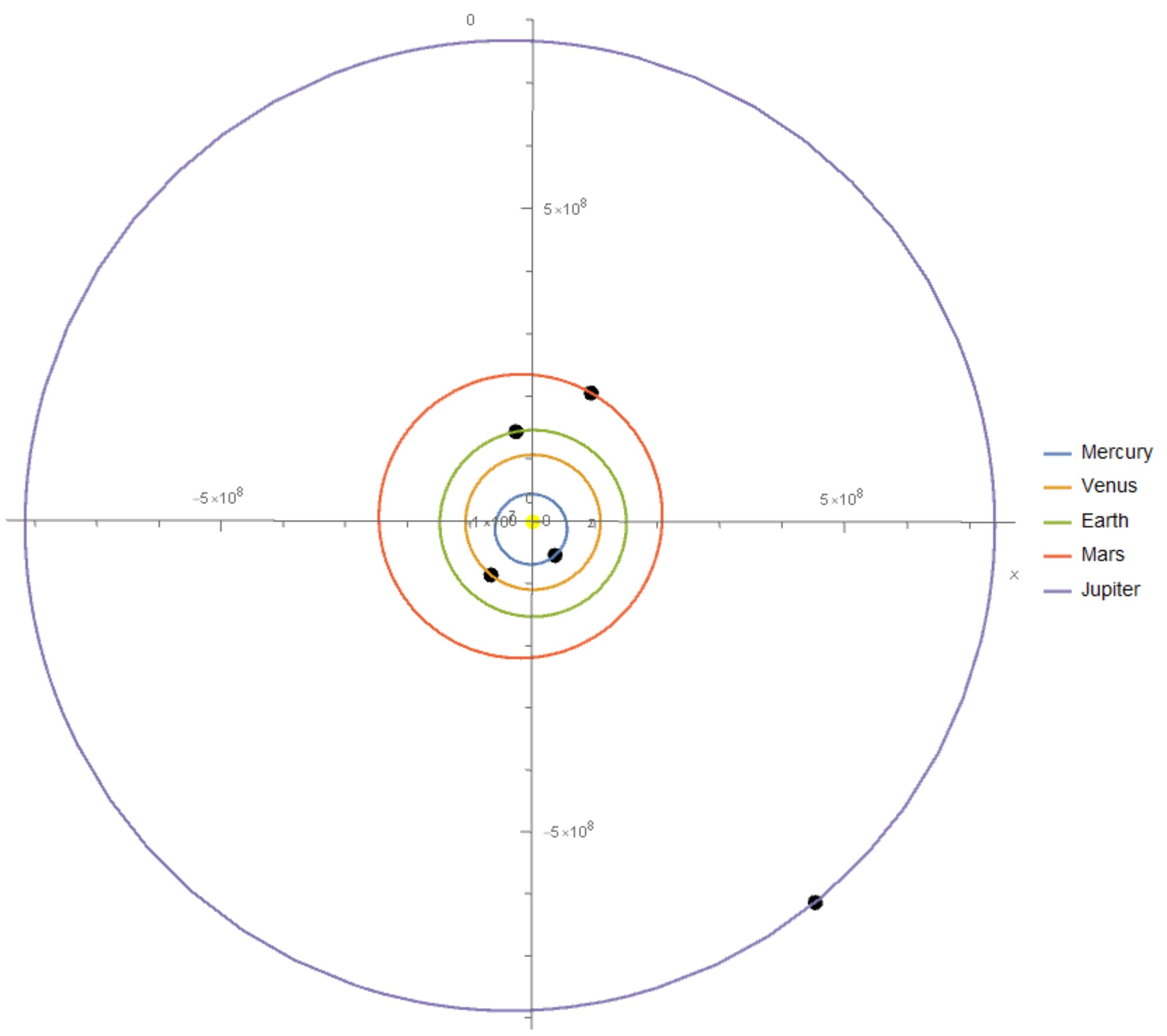}
  \caption{Top view: Positions on January 1, 2021}
\end{figure}

\end{document}